\newtheorem{theorem}{Theorem}
\newtheorem{lemma}[theorem]{Lemma}
\begin{document}
\IEEEoverridecommandlockouts
\title{Low Power Analog-to-Digital Conversion in Millimeter Wave Systems: Impact of Resolution and Bandwidth on Performance}
\author{\IEEEauthorblockN{Oner Orhan, Elza Erkip, and Sundeep Rangan\thanks{
This work was supported in part by NSF CNS-1302336 and NYU WIRELESS.}}
\IEEEauthorblockA{NYU Polytechnic School of Engineering, Brooklyn, NY\\
Email: \text{onerorhan@nyu.edu}, \text{elza@nyu.edu}, \text{srangan@nyu.edu}}}
\maketitle

\begin{abstract}
The wide bandwidth and large number of antennas used in millimeter wave systems put a heavy burden on the power consumption at the receiver. In this paper, using an additive quantization noise model, the effect of
analog-digital conversion (ADC) resolution and bandwidth on the achievable rate is investigated for a multi-antenna system under a receiver power constraint.
Two receiver architectures, analog and digital combining, are compared in terms of performance.
Results demonstrate that:
(i)  For both analog and digital combining,
there is a maximum bandwidth beyond which the achievable rate decreases; (ii) Depending on the operating regime of the system, analog combiner may have higher rate but digital combining uses less bandwidth when only ADC power consumption is considered, (iii) digital combining may have higher rate when power consumption of all the components in the receiver front-end are taken into account.
\end{abstract}

\section{Introduction}\label{intro}
The millimeter wave (mmWave) bands, roughly between 30 and 300~GHz,
are an attractive candidate for next-generation cellular systems
due to the vast quantities of spectrum
and the potential for exploiting very high-dimensional
antenna arrays \cite{KhanPi:11-CommMag}-\cite{Rappaport2014-mmwbook}.
However, a significant issue in  realizing these systems
is power consumption, particularly in handheld mobile devices.
In addition to the high power consumed baseband
processing, one critical concern is the power consumption
in the analog-digital conversion (ADC) due to the need to process
large number of antenna outputs and very wide bandwidths.

Classical information theoretic formulation \cite{cover}
characterizes the maximum achievable communication rate on a channel
as a function on the signal-to-noise ratio (SNR) and bandwidth.
The broad goal of this paper is to understand what the information
theoretic limits on communication in the presence of constraints
on the ADC power consumption are
and how we can  design communication systems to meet these limits.
Although a closed form expression for the optimal input and capacity of
a 1-bit quantizer is given in \cite{madhow}, it is difficult to obtain the capacity and input distributions of multi-bit quantizer. Therefore, we
restrict our analysis to an additive quantization noise model (AQNM)
\cite{gray}.  Under this assumption, ADC power constraints can be
 easily abstracted as constraints on the sampling rate and quantizer noise
 level.
We can then obtain lower bounds on the capacity and
investigate the optimal bandwidth and resolution of the
ADC such that the
achievable rate is maximized.

We first study the effect of bandwidth and sampling rate of ADC on the performance of single-input and single output (SISO) system under the total receiver power budget including the ADC power cost.  We next consider multiple-input multiple-output (MIMO) point-to-point systems operating over a large bandwidth.
Our analysis considers both spatial multiplexing and beamforming
-- the two basic methods for MIMO systems \cite{TseV:07}.
Spatial multiplexing is favorable when the channel is bandwidth-limited and the SNR is high. However, beamforming provides power gain and is beneficial if the channel is power-limited as in most of the mmWave systems due to available wide bandwidth and high path loss \cite{sun}.

For mmWave MIMO systems, there are three receiver combining methods
that need to be considered: analog combining, digital combining and hybrid analog/digital combining \cite{sparse}.
Digital combining
is the method in use in conventional cellular transceivers
today where each antenna element has a separate pair of ADCs.  This method offers
the greatest flexibility, but highest power consumption for a given ADC
resolution and sampling rate.  An alternate design is to combine
the signals in analog (either in RF or IF) so that only one pair of ADCs is required
per stream
\cite{KohReb:07}-\cite{Heath:partialBF}.
Hybrid beamforming \cite{Heath:partialBF} uses a two stage combination of these designs.  See \cite{sun} for a general discussion.

Since power dissipation of ADC scales linearly in sampling rate and exponentially in the number of bits per sample \cite{lee}, it may not be desirable to operate the system over the full bandwidth and high resolution. Therefore, digital combining can save energy by using narrow band (low sampling rate) and low resolution ADCs while increasing its effective received SNR and allowing the spatial multiplexing at the transmitter for better performance. As a result, analysis of performance trade-off of analog and digital combining is essential for power limited receivers.

\begin{figure}
\centering
\subfigure[]{
\includegraphics[scale=0.5,trim= 20 10 10 0]{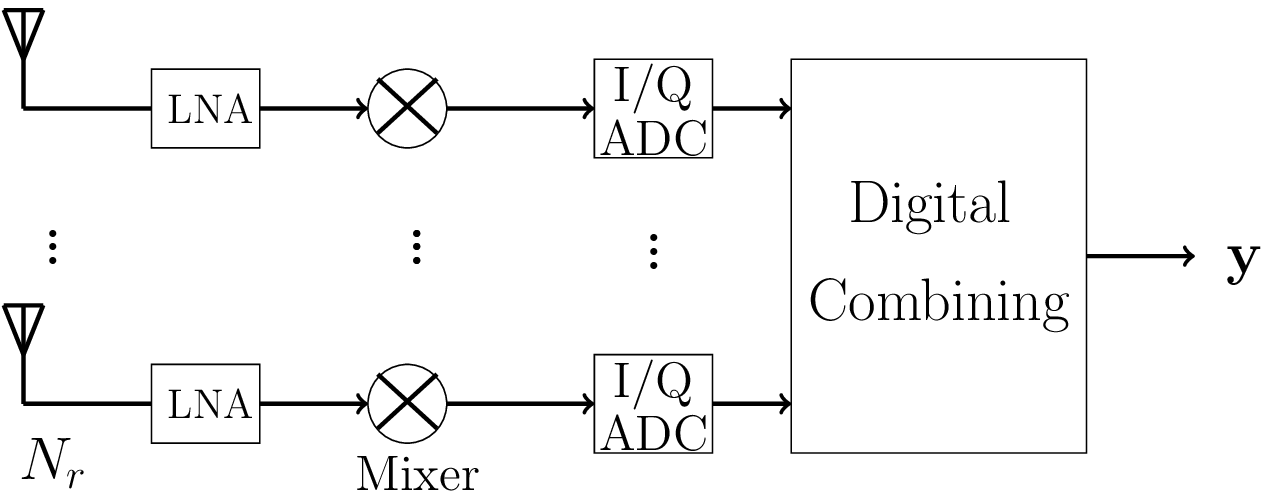}
\label{sys:subfig1}
}
\subfigure[]{
\includegraphics[scale=0.52,trim= 20 10 10 0]{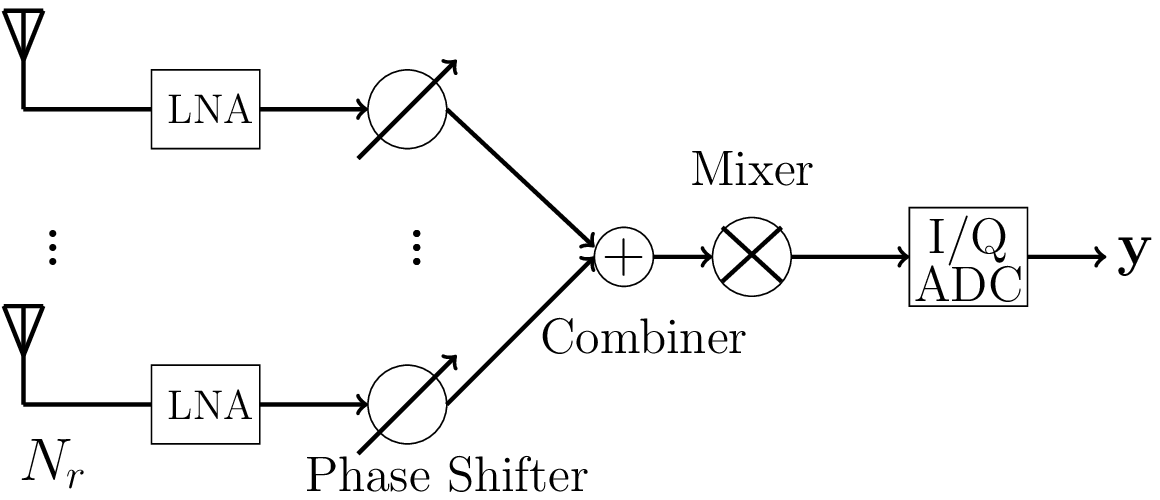}
\label{sys:subfig2}
}
\caption{Receiver architectures: (a) Digital combining (b) Analog combining.}
\label{sys}%
\vspace{-0.1in}
\end{figure}

To compare these architectures,
we derive achievable rates of the MIMO system with both
analog combining and digital combining and study the effect of bandwidth, ADC resolution and the number of antennas on the achievable rate. Constraints
are placed on the total front-end receiver power or only
the ADC power. We expect most of the simulation results for the hybrid analog/digital combining to be somewhere between analog and digital combining, and hence we leave detailed analysis of this as a future work.

Under these power constraints, our results show that:
\begin{itemize}
\item For a point-to-point SISO system when bandwidth is large as in mmWave systems, there exists an optimal bandwidth and resolution of the ADC.
    As a consequence, it may not be desirable to operate the system over the full bandwidth.
\item For a point-to-point MIMO system when only ADC power consumption is considered, we see that at least in some operating regimes, analog combining may have higher maximum achievable rate than digital
    combining.  However, in these cases, digital combining can attain its maximum rate at much less bandwidth. Indeed, as the number of antennas at the transmitter and receiver $N$ increases, the utilized bandwidth decreases by a factor of $\frac{1}{N}$.
\item For a point-to-point MIMO system when the power consumption of all receiver components including ADCs are considered, digital combining may be able to achieve a higher rate when the channel state information is available at the transmitter.
\end{itemize}

\subsection*{Previous work}

In recent years, energy efficient transceiver architectures such as use of low resolution ADCs and hybrid analog/digital precoding (combining) has attracted significant interest. The limits of communication over additive white Gaussian channel with low resolution (1-3) bits ADCs at the receiver is studied in \cite{madhow}. This is extended in \cite{fettweis} to the capacity of the 1-bit ADC with oversampling at the receiver. The bounds on the capacity of the MIMO channel with 1-bit ADC at high and low SNR regimes are derived in \cite{heath} and \cite{mezgani}, respectively. Using AQNM of ADC, the joint optimization of ADC resolution with the number of antennas in a MIMO channel is studied in \cite{nossek}. There is also rich literature on the hybrid analog/digital transceivers \cite{sparse}-\cite{robert}. While \cite{sparse} provides efficient hybrid precoding and combining algorithms for sparse mmWave channels which performs close to full digital solution, \cite{robert} combines efficient channel estimation with the hybrid precoding and combining algorithm in \cite{sparse}. However, to the best of our knowledge, there is no work which considers the effects of both sampling rate (bandwidth) and resolution of ADC on the performance of the system under a total receiver power constraint. In addition, we investigate and compare the performance of analog and digital combining and optimize the resolution, bandwidth and the number of antennas.

It should be noted that our analysis would likely apply most closely to data
plane traffic where the overhead for channel tracking is relatively small.
For initial synchronization, where the channel or even the presence
of the transmitting base station may not be known, digital combining
can have distinct advantages not accounted for in this analysis
-- see \cite{BarHosCellSearch:14-spawc}.

\section{System Model}\label{sys model}
We consider a point-to-point mmWave communication system operating over a bandwidth $W_{tot}$ Hz. We assume that there are $N_t$ antennas at the transmitter and $N_r$ antennas at the receiver. We assume additive white Gaussian noise (AWGN) with power spectral density $\frac{N_0}{2}$ Watts/Hz. The transmitted signal has average power constraint $P$ Watts. The channel exhibits frequency selective fading, which is independent across frequency bands and across antennas. The instantaneous fading realizations are assumed to be known at the receiver.

We consider two receiver architectures as shown in Figure~\ref{sys}:
digital combining and analog combining. As shown in the figure,
for digital combining,
 ADCs are employed to quantize the signal before the baseband combiner.
 Each block labeled ``I/Q ADC" represents two ADCs -- one for the inphase and another for the quadrature components, both
with sampling rate equal to the Nyquist rate.
In the analog combining architecture, the signals are combined in analog
with phase shifters, and then digitized with a single I/Q ADC.
Thus, for RF analog combining,
we have $N_r$ low noise amplifiers (LNAs),
$N_r$ phase shifters, one combiner, one mixer and one I/Q ADC,
while the digital combining architecture consists of only $N_r$ LNAs and $N_r$ mixers \cite{Taghivand}, but $N_r$ ADCs.

We assume that for either analog or digital combing,
each ADC consists of a $b$-bin scalar quantizer. Note that in \cite{madhow}, it is proved that for $b$-bin output quantization, at most $b+1$ mass points at the input are enough to achieve the capacity. However, obtaining the optimal input distribution for arbitrary number of quantization bins is difficult. Therefore, in this paper, in order to get insights, we use an AQNM for the quantizer and find a lower bound to the capacity of the corresponding channel by assuming Gaussian quantization noise and Gaussian inputs. Further benefits of AQNM include ease of implementation since the decoder can use standard linear processing and Gaussian decoding.

\subsection{Additive Quantization Noise Model (AQNM)}
We denote the output of the ADC corresponding to input $z$ by $Q(z)$. We consider that the quantizer output $z_q=Q(z)$ is chosen such that $z_q=E[z|z_q]$. The quantizer $Q(\cdot)$ can be represented by the following AQNM \cite{rangan}:
\begin{eqnarray}\label{eq0}
z_q &=&\alpha z+n_q,
\end{eqnarray}
where $n_q$ is the additive quantization noise such that $z$ and $n_q$ are uncorrelated. Note that
\begin{eqnarray}\label{eq1}
E[n_q]&=&(1- \alpha) E[z]\\\label{eq2}
\sigma_{n_q}^2&=&(1-\alpha)\alpha \sigma_z^2,
\end{eqnarray}
Here, $\sigma_{n_q}^2$ is variance of additive quantization noise. Accordingly, $\alpha$ can be computed as:
\begin{eqnarray}\label{eq3}
\alpha =1-\beta,
\end{eqnarray}
where $\beta=\frac{\sigma_{e_q}^2}{\sigma_z^2}$ where $\sigma_{e_q}^2$ is the variance of quantization error, $e_q=z-z_q$, and $\sigma_z^2$ is variance of the quantization input. In \cite{gray}, $\frac{1}{\beta}$ is referred as the coding gain.


 Motivated by \cite{cover} we assume that $\mathbf{n_q}$ has the Gaussian distribution. Furthermore, since the Gaussian input maximizes the mutual information between the input and the output when the noise is Gaussian, we assume that the input to the MIMO channel is jointly Gaussian. Accordingly, for non-uniform scalar MMSE quantizer of a Gaussian random variable, $\beta$ can be approximated to $\beta=\frac{\pi \sqrt{3}}{2}b^{-2}$, where $b$ is the number of quantization bins \cite{gray}. In this paper, without loss of generality we assume that $\beta=ab^{-2}\leq 1$ for some constant $a>0$. Note that as $b \rightarrow \infty$, $\beta \rightarrow 0$.

\subsection{Power Consumption of the Receiver}
We consider two scenarios to account for the power consumption at the receiver. In the first one,
we assume a total processing power budget. Denoting the power consumption of the LNA, phase shifter, combiner, mixer and ADC  by $P_{LNA}$, $P_{PS}$, $P_C$, $P_M$ and $P_{ADC}$, respectively, the total power consumption $P_{tot}$ of analog and digital combining in terms of $N_r$ are respectively given by
\begin{eqnarray}\label{pow1}
P_{tot}=N_r (P_{LNA}+P_{PS})+P_C+ P_M +2 P_{ADC},
\end{eqnarray}
and
\begin{eqnarray}\label{pow2}
P_{tot}=N_r (P_{LNA}+P_{M}+2 P_{ADC}).
\end{eqnarray}

We assume that the power consumption of the LNA, phase shifter, combiner, and mixer are independent of the bandwidth. The power dissipation of ADC scales linearly in sampling rate and exponentially in the number of bits per sample \cite{lee}. Assuming sampling at the Nyquist rate, the figure of merit of ADC power consumption is modeled as
\begin{eqnarray}\label{adc_power}
P_{ADC}=c W b,
\end{eqnarray}
where $W$ is sampling rate (bandwidth), $b$ is the number of quantization bins of ADC, and $c$ is the energy consumption per conversion step, e.g. $c=494$ fJ \cite{Murmann}.

In the second scenario, we only put a constraint on the total power used by all the ADCs in the receiver. While this helps avoid comparing power consumption of  different components (which are designed using possibly different technologies), the analysis may only give a partial indication of what could happen in practice.

\section{Accuracy of AQNM}\label{accuray}
In order to illustrate the accuracy of the AQNM, we consider a simple SISO system, i.e., $N_t=N_r=1$, with fixed complex channel gain $h$. We compare the rate achieved by Gaussian inputs in the AQNM  with the capacity computed for the 2, 4, and 8-bin quantizers \cite{madhow}.  Using the AQNM in (\ref{eq0}), the equivalent received signal after the quantizer can be written as
\begin{eqnarray}
y=(1-ab^{-2}) h x +(1-ab^{-2}) n + n_q,
\end{eqnarray}
where $\sigma_{n_q}^2=ab^{-2}(1-ab^{-2})(|h|^2 P + N_0)$. Then, the following rate is achievable when operating over a bandwidth $W$:
\begin{eqnarray}\label{ch1}
R=W\log_2\left(1+\frac{(1-ab^{-2}) |h|^2 P}{ab^{-2} |h|^2 P +WN_0}\right) \quad \text{bits/sec}.
\end{eqnarray}
Note that the rate given in (\ref{ch1}) is monotonically increasing and concave function of $P$ for a given bandwidth and resolution.

As shown in Figure 2, the rate in (\ref{ch1}) lower bounds the capacity. At low SNRs the gap between the capacity and the rate under the AQNM is small, and as the quantization resolution increases the gap decreases. For example, at $-10$dB SNR, the rate assuming AQNM is $96\%$ and $99\%$ of the capacity for $2$-bin and $8$-bin quantization, respectively. However, at high SNR, the gap between the capacity and the rate of the AQNM increases. For example, at $20$dB SNR, the rate under the AQNM is $72\%$ and $77\%$ of the capacity for $2$-bin and $8$-bin quantization, respectively. In this regime, the rate achieved by an equiprobable $b$ point input distribution is close to the capacity which can be approximated by $W\log_2(b)$ bits/s \cite{madhow}. Similarly, in the high SNR regime the rate in (\ref{ch1}) can be approximated by $W\log_2(\frac{b}{a})/2$, or $-W\log_2(\beta)/2$, bits/s. Therefore, the gap between the capacity and the rate under the AQNM becomes $W\log_2(\sqrt{a})$, or $W\log_2(b\sqrt{\beta})$ as SNR increases. However, based on the fact that the capacity and the achievable rate under the AQNM show similar trends, we believe that the conclusions obtained in this paper would be valid when optimal input distribution is used, even though the actual numerical values of optimal bandwidth, resolution etc. may be slightly different.

\begin{figure}[t]\label{sim1}
\centering
\includegraphics[scale=0.65,trim= 0 0 0 0]{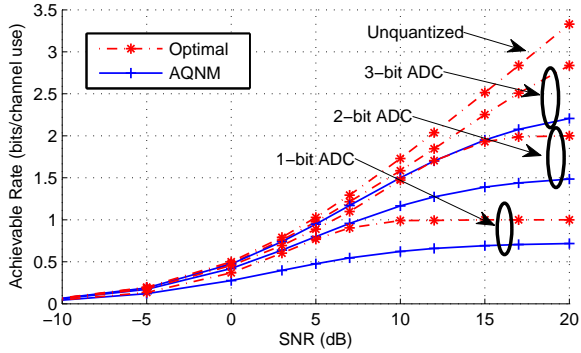}
\caption{Achievable rate versus SNR for the optimal input distribution and AQNM with Gaussian input. $\beta=0.363, 0.118, 0.037$ for $[2, 4, 8]$-bin quantizers.}
\end{figure}

\section{The Optimal Bandwidth and Resolution of ADC for SISO Systems}\label{SISO}
In order to get insights on the bandwidth usage and optimal resolution of ADCs, in this section we limit our investigation to the SISO channel. As in Section~\ref{accuray}, we assume the channel gain is constant and we investigate how the achievable rate in (\ref{ch1}) can be maximized by optimally choosing  the bandwidth $W$ and the number of quantization bins $b$ under a total receiver power $P_{tot}$ constraint. For $P_{tot}$, we follow the model in (\ref{pow2}). The optimal bandwidth and resolution can be computed by the following optimization problem.
\begin{subequations}\label{ppp:4}
\begin{eqnarray}\label{ppp:4a}
\underset{b, W}{\operatorname{max}} && W\log_2\left(1+\frac{\left(1-a b^{-2}\right) |h|^2 P}{a b^{-2} |h|^2 P +W N_0}\right) \\\label{ppp:4b}
\text{s.t.}~ && P_{LNA}+ P_{M} + 2cWb \leq P_{tot},  \\ \label{ppp:4c}
&& 0\leq W \leq W_{tot}, \quad 0 \leq b.
\end{eqnarray}
\end{subequations}

\begin{lemma}\label{lemma1}
For the optimization problem in (\ref{ppp:4}) when $W_{tot}$ is large, there exists an optimal bandwidth beyond which the achievable rate in (\ref{ppp:4a}) starts decreasing.
\end{lemma}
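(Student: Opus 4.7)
The plan is to reduce the joint maximization in (\ref{ppp:4}) to a single-variable optimization in $W$, and then exploit the fact that the reduced rate vanishes at two distinct bandwidths, forcing an interior maximum. To eliminate $b$, I first observe that for fixed $W$ the objective (\ref{ppp:4a}) is strictly increasing in $b$: raising $b$ shrinks the quantization factor $ab^{-2}$, which simultaneously increases the numerator of the SNR and decreases the self-noise term $ab^{-2}|h|^{2}P$ in the denominator. Hence at any optimizer the ADC power constraint (\ref{ppp:4b}) must be active, giving $b = C/(2cW)$ with $C := P_{tot} - P_{LNA} - P_M$. Substituting reduces (\ref{ppp:4}) to maximizing
\[
R(W) \;=\; W \log_{2}\!\left(1 + \frac{\bigl(1 - a(2cW/C)^{2}\bigr)\,|h|^{2}P}{a(2cW/C)^{2}\,|h|^{2}P + W N_{0}}\right)
\]
over $W \in (0, W_{\max}]$, where $W_{\max} := C/(2c\sqrt{a})$ is the largest bandwidth consistent with the AQNM validity requirement $\beta = ab^{-2} \le 1$. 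The hypothesis that $W_{tot}$ is large is used precisely to guarantee $W_{\max} \le W_{tot}$, so that (\ref{ppp:4c}) is inactive and the interior structure of $R(W)$ alone governs the maximum.

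Next I will evaluate $R$ at the two endpoints of the reduced feasible interval. As $W \downarrow 0$, the effective resolution $b = C/(2cW)$ grows unboundedly, so $ab^{-2} \to 0$ and the log-argument behaves like $|h|^{2}P/(W N_{0})$; the prefactor $W$ then forces $R(W) = O\!\left(W \log(1/W)\right) \to 0$. At the opposite endpoint $W = W_{\max}$, the quantization factor satisfies $ab^{-2} = 1$ exactly, so the numerator of the SNR vanishes and $R(W_{\max}) = W_{\max}\log_{2} 1 = 0$. Since $R$ is continuous on $(0, W_{\max}]$ and strictly positive at any interior point with $ab^{-2} < 1$, the extreme value theorem delivers an interior maximizer $W^{\star} \in (0, W_{\max})$. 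Because $R$ returns to zero at $W_{\max}$, it must strictly decrease on some subinterval of $(W^{\star}, W_{\max})$, which is exactly the ``beyond which the rate starts decreasing'' conclusion.

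The main obstacle is verifying the endpoint behavior at $W \downarrow 0$, where the elimination $b = C/(2cW)$ is only formal and one must confirm that $R$ extends continuously by zero so that the extreme value argument actually applies. The remaining steps---monotonicity in $b$, tightness of the power budget, and the collapse to zero at $W_{\max}$---follow directly from inspection of (\ref{ch1}). An explicit closed form for $W^{\star}$ is not required by the lemma and would in any case demand solving a transcendental first-order condition; the qualitative claim of an interior peak is fully captured by the two-endpoint vanishing argument above.
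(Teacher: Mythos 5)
Your proof is correct, but it takes a genuinely different route from the paper's. Both arguments begin identically: monotonicity of the rate in $b$ makes the power constraint (\ref{ppp:4b}) tight, and substituting $b=(P_{tot}-P_{LNA}-P_M)/(2cW)$ collapses (\ref{ppp:4}) to a one-variable maximization over $W$. From there the paper argues via \emph{concavity}: it asserts that the reduced objective (\ref{eq5a}) is concave in $W$, checks the limits ($0$ as $W\to 0$, $-\infty$ as $W\to\infty$, letting $W$ range formally past the point where $\beta=1$), notes positivity when $ab^{-2}<1$, and concludes a finite maximizer $W^*$ exists. You instead truncate the domain at $W_{\max}=C/(2c\sqrt{a})$, where the model-validity condition $\beta\le 1$ saturates, and run a purely topological two-endpoint argument: $R\to 0$ as $W\downarrow 0$ (your $O(W\log(1/W))$ estimate is right, since $ab^{-2}=O(W^2)$ makes the denominator $\sim WN_0$), $R(W_{\max})=0$ exactly, $R>0$ in the interior, so the extreme value theorem forces an interior peak. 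Your route is more elementary, stays faithful to the AQNM assumption $\beta\le 1$ (never evaluating the rate where the model breaks down), and sidesteps the concavity claim, which the paper asserts without verification; it does in fact hold, since writing the objective as $W\log_2(|h|^2P+WN_0)-W\log_2 W-W\log_2(kW+N_0)$ one checks the convex first term is dominated by $-W\log_2 W$ (because $\frac{N_0(2|h|^2P+N_0W)}{(|h|^2P+N_0W)^2}<\frac{1}{W}$ for all $W>0$), but this is precisely the computation your argument avoids. The price is a slightly weaker conclusion: concavity buys unimodality and uniqueness---the rate decreases monotonically for all $W>W^*$---whereas your argument yields an interior global maximizer and a strict decrease only on some subinterval beyond it, leaving open (harmlessly, for the lemma as worded) non-monotone behavior between $W^{\star}$ and $W_{\max}$.
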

\begin{proof}
Since the rate in (\ref{ch1}) is monotonically increasing function of $b$, the constraint in (\ref{ppp:4b}) must be satisfied with equality. Therefore, from (\ref{ppp:4b}) we can argue that $b$ is equal to $\frac{P_{tot}-P_{LNA}-P_{M}}{2Wc}$. Accordingly, the optimization problem in (\ref{ppp:4}) can be reformulated as follows:
\begin{eqnarray}\label{eq5a}
\underset{W}{\operatorname{max}} && W \log_2\left(\frac{|h|^2P + W N_0}{\frac{a 2^2 c^2 W^2 |h|^2P}{(P_{tot}-P_{LNA}-P_{M})^2}+WN_0}\right),
\end{eqnarray}
where $0 \leq W \leq W_{tot}$. Here we assume $W_{tot} \rightarrow \infty$. The objective function in (\ref{eq5a}) is a concave function of $W$ when $P_{tot}-P_{LNA}-P_{M}>0$, $W>0$ and $a>0$. We can argue that as $W\rightarrow 0$, the objective function in (\ref{eq5a}) goes to 0, as $W\rightarrow \infty$ the objective function in (\ref{eq5a}) goes to $-\infty$. In addition, the objective function in (\ref{eq5a})  is positive when $\frac{a 2^2 c^2 W^2}{(P_{tot}-P_{LNA}-P_{M})^2}=ab^{-2}<1$. Therefore we can conclude that when $ab^{-2}< 1$, there exists an optimal finite bandwidth $W^*$ for which rate in (\ref{ppp:4}) is maximized.
\end{proof}

Lemma \ref{lemma1} shows that while for AWGN channels (even for those whose outputs are quantized)  the capacity increases as the bandwidth gets larger, under receiver power constraints, this is no longer the case. This suggests that even though the bandwidth is abundant in mmWave systems, since ADC is power hungry, it may not be desirable to operate the system over its full bandwidth. Similarly, there exits an optimal number of quantization levels that maximizes capacity under receiver power constraints.

In order to show the effect of SNR on the optimal bandwidth and  the number of quantization bins, we  set $P_{tot}-P_{LNA}-P_{M}=2P_{ADC}= 20$ mW and  the energy consumption per conversion step $c=494$ fJ \cite{Murmann}. The total bandwidth is $W_{tot}= 7$ GHz \cite{daniel}. We consider scalar non-uniform MMSE quantizer at the ADC and compute $\beta$ using Lloyd-Max algorithm \cite{max}. As shown in Figure 3, in the low SNR regime it is optimal to use a lower bandwidth and larger number of quantization bins. This allows the transmitter to spread the available power over a smaller band and increase effective SNR. Here, SNR refers to the received SNR at the full bandwidth $W_{tot}$, i.e., $|h|^2{P}/{W_{tot}N_0}$. Note that beyond 5dB SNR, the optimal resolution and bandwidth are $b=3$ and $W=6.75$ GHz, which is slightly less than the total available bandwidth 7 Ghz.
\begin{figure}[ht]\label{sim4}
\centering
\includegraphics[scale=0.65,trim= 0 0 0 0]{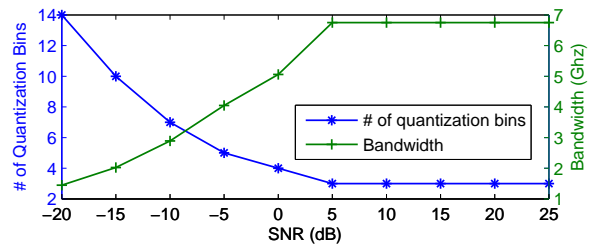}
\caption{The optimal bandwidth and number of quantization bins versus received SNR at the full bandwidth for the SISO channel. $W_{tot}=7$ Ghz. Total ADC power consumption is fixed at $20$mW and $c=494$ fJ.}
\end{figure}


\section{MIMO System}\label{MIMO}
In this section, we investigate the achievable rate for the MIMO system with digital and analog combining. For both receiver architectures we consider two scenarios: i) Perfect channel state information at the transmitter (CSIT) where instantaneous channel realizations are known at the transmitter; ii) no CSIT,  where the transmitter only knows the channel state.

\subsection{Digital Combining}\label{MIMO_dig}
Using AQNM, we can obtain the equivalent channel per frequency band as follows:
\begin{eqnarray}\label{eq20}
\mathbf{y_q} =(1-a b^{-2})\mathbf{H}\mathbf{x}+(1-a b^{-2})\mathbf{n} +\mathbf{n_q},
\end{eqnarray}
where $R_{n_qn_q}=a b^{-2}(1-a b^{-2})\text{diag}(\mathbf{H} R_{xx}\mathbf{H^H}+W R_{nn})$. Here $R_{nn}$ is the noise covariance matrix and $R_{xx}$ is the input covariance matrix. Then, the following rate is achievable with digital combining operating over a bandwidth $W \leq W_{tot}$:
\begin{equation}\label{eq17}
E_H\left[W\log_2\left|\mathbf{I}+\frac{(1-a b^{-2})\mathbf{H}R_{xx}\mathbf{H^H}}{a b^{-2}\text{diag}(\mathbf{H} R_{xx}\mathbf{H^H})+W R_{nn}}\right|\right].
\end{equation}

Since we consider independent identically distributed fading across antennas and frequency bands,  the optimal input covariance matrix under no CSIT is $R_{xx}=\frac{P}{N_t}\mathbf{I}$ \cite{telatar}.

Under CSIT, since the input covariance matrix appears as noise in (\ref{eq17}), using singular value decomposition (SVD) of the channel matrix for $R_{xx}$ does not immediately convert the MIMO channel into multiple parallel channels. However, in this section for ease of implementation we use the SVD of the channel matrix  to determine $R_{xx}$ which provides a further lower bound to the achievable rate. We also assume constant power allocation across frequency bands. Then, denoting SVD of $\mathbf{H}$ by $\mathbf{U}\Lambda \mathbf{V^H}$, we can rewrite (\ref{eq20}) as
\begin{eqnarray}
\mathbf{y_q}&=&(1-a b^{-2})\mathbf{U}\Lambda \mathbf{V^H}\mathbf{x}+(1-a b^{-2})\mathbf{n} +\mathbf{n_q},
\end{eqnarray}
Using the transmit beamforming matrix $\mathbf{V}$ and the digital combining matrix $\mathbf{U^H}$, we obtain the received signal as
\begin{eqnarray}
\hspace{-0.05in}\mathbf{y}& =  & \mathbf{U^H}\mathbf{y_q} \\
&=&(1-a b^{-2})(\mathbf{U^H}\mathbf{U}\Lambda \mathbf{V^H}\mathbf{V}\mathbf{\tilde{x}}+\mathbf{U^H}\mathbf{n}) +\mathbf{U^H}\mathbf{n_q}\\\label{eq21}
&=&(1-a b^{-2})\Lambda \mathbf{\tilde{{x}}}+(1-a b^{-2})\mathbf{\tilde{n}} +\mathbf{\tilde{n}_q},
\end{eqnarray}
where $\mathbf{\tilde{x}}=\mathbf{V^H}\mathbf{x}$, $\mathbf{\tilde{n}}=\mathbf{U^H}\mathbf{n}$ and $\mathbf{\tilde{n}_q}=\mathbf{U^H}\mathbf{n_q}$. Here $\mathbf{\tilde{n}}$ has the same distribution as $\mathbf{n}$. Allocating power across the eigenvalues of  $\Lambda$  according to  the waterfilling solution, we obtain the following achievable rate:
\begin{equation}\label{eq11}
E_H\left[W\log_2\frac{|R_{n'n'}+(1-a b^{-2})\Lambda \mathbf{Q}\Lambda |}{|R_{n'n'}|}\right],
\end{equation}
where $R_{n'n'}=a b^{-2}\mathbf{U^H}\text{diag}(\mathbf{U}\Lambda \mathbf{Q} \Lambda \mathbf{U^H})\mathbf{U}+W R_{nn}$. Here $\mathbf{Q}$ denotes the diagonal matrix that contains the power levels. Note that the above waterfilling strategy would be optimal in the case of no quantization, where the power dependent quantization noise term is zero.


\subsection{Analog Combining:}
Using AQNM, we can obtain the equivalent channel of analog combining per frequency band as follows:
\begin{eqnarray}\label{eq13}
\mathbf{y_q}  = (1-a b^{-2})\mathbf{w_r^H}\mathbf{H}\mathbf{w_t}x+(1-a b^{-2})\mathbf{w_r^H}\mathbf{n} +\mathbf{n_q},
\end{eqnarray}
where $\sigma_{n_q}^2=a b^{-2}(1-a b^{-2})(|\mathbf{w_r^H}\mathbf{H}\mathbf{w_t}|^2P+N_r N_0)$. Here, $\mathbf{w_r}$ is the analog combining vector such that $|w_{r,i}|=1$ $i=1, \ldots,  N_r$, and $\mathbf{w_t}\in \mathbb{C}^{N_t \times 1}$ is the digital beamforming vector at the transmitter. Then, the achievable rate of analog combining operating over a bandwidth $W \leq W_{tot}$ is given by:
\begin{equation}\label{eq19}
E_H\left[\underset{\mathbf{w_r},\mathbf{w_t}}{\operatorname{max}}  W\log_2\left(1+\frac{(1-a b^{-2})P|\mathbf{w_r^H}\mathbf{H}\mathbf{w_t}|^2}{a b^{-2}P|\mathbf{w_r^H}\mathbf{H}\mathbf{w_t}|^2+W N_r N_0}\right)\right].
\end{equation}
Note that the above expression is monotonically increasing function of the received power $P|\mathbf{w_r^H}\mathbf{H}\mathbf{w_t}|^2$. Therefore, the achievable rate is maximized when the term $|\mathbf{w_r^H}\mathbf{H}\mathbf{w_t}|^2$ is maximized. The maximum value that can be obtained depends of the availability of CSIT.

\subsubsection{ Analog Combining without CSIT}
Since there is no CSIT and the channel is symmetric,  the optimal transmit beamforming vector is $w_{t,i}=\frac{1}{\sqrt{N_t}}$, $i=1,..., N_t$. We the have \cite{love}
\begin{eqnarray}
|\mathbf{w_r^H}\mathbf{H}\mathbf{w_t}|^2 &= & \left|\sum_{i=1}^{N_r} e^{j\phi_i}\frac{1}{\sqrt{N_t}}\sum_{j=1}^{N_t} h_{ij}\right|^2\\
&\leq & \frac{1}{N_t}\left(\sum_{i=1}^{N_r}\left|\sum_{j=1}^{N_t} h_{ij}\right|\right)^2,
\end{eqnarray}
where $\phi_i$ is phase of the $i$th element of $\mathbf{w_r}$. The inequality above holds with equality when  $\phi_i$ is  chosen as the phase of $\sum_{j=1}^{N_t} h_{ij}$ plus $\pi$.

\subsubsection{Analog Combining with CSIT}
At the transmitter digital beamforming in the form of maximum ratio transmission maximizes the received power. For a given analog combining vector $\mathbf{w_r}$ we have
\begin{eqnarray}
|\mathbf{w_r^H}\mathbf{H}\mathbf{w_t}|^2 \leq \|\mathbf{w_r^H}\mathbf{H}\|^2  \|\mathbf{w_t}\|^2.
\end{eqnarray}
The above inequality satisfied with equality when $\mathbf{w_t}=\frac{\mathbf{H^H}\mathbf{w_r}}{\|\mathbf{H^H}\mathbf{w_r}\|}$. Note that the normalization is due to the power constraint at the transmitter, where, as in Section~\ref{MIMO_dig} we assume no power allocation across frequency bands. Then, we have $|\mathbf{w_r^H}\mathbf{H}\mathbf{w_t}|^2=\|\mathbf{w_r^H}\mathbf{H}\|^2$.


\section{Illustration of the  Results}
In this section, we provide numerical results to show the effects of  ADC bandwidth and resolution,  the number of antennas, receiver architecture and power consumption on the achievable rate. We consider scalar non-uniform MMSE quantizer at each ADC and compute $\beta$ using Llyod-Max algorithm \cite{max}. We set the energy consumption per conversion step $c$ to $494$ fJ \cite{Murmann}. We consider that the maximum available bandwidth is $W_{tot}=7$ Ghz  for operation in the  57-64 Ghz band \cite{daniel}. Throughout SNR refers to SNR at the full bandwidth, that is, ${P}/{W_{tot}N_0}$. All results are obtained under independent Rayleigh fading across space and frequency.

We first consider only the total ADC power consumption which is set to 20 mW. Hence for analog combining $2P_{ADC}=20$ mW while for digital $2N_r P_{ADC}=20$ mW. We examine the effect of SNR on the rates of   $1 \times 3$ SIMO, and $ 3 \times 3$  MIMO systems with and without CSIT for analog and digital combining as shown in Figure 4. The figure includes achievable  rates for only the best resolution levels. Note that performances of the SIMO and the MIMO systems with no CSIT are the same for  analog combining. We observe that the optimal resolution is 3-bin quantization for all scenarios investigated except for  SIMO with digital combining for which 2-bin quantization optimal.  As shown in the figure,  analog combining achieves higher rates than  digital combining for all the cases studied even though the MIMO system with digital combining has the advantage of spatial multiplexing. The benefit of spatial multiplexing can be seen from the gap between the rates of the SIMO and MIMO systems with digital combining. However,  MIMO system with digital combining utilizes one-third ($1/N_r$) of the bandwidth used by the analog combining architecture. This is due to fact that the total ADC power budget in the case of digital combining is shared among  antennas lowering the bandwidth usage.  For the scenario investigated, optimal bandwidth usage for analog combining is $W=6.75$ GHz  while the usage for the digital combining architecture is  2.25 Ghz and 3.37 Ghz  for MIMO and SIMO respectively.

\begin{figure}[ht]\label{sim5}
\centering
\includegraphics[scale=0.65,trim= 0 0 0 0]{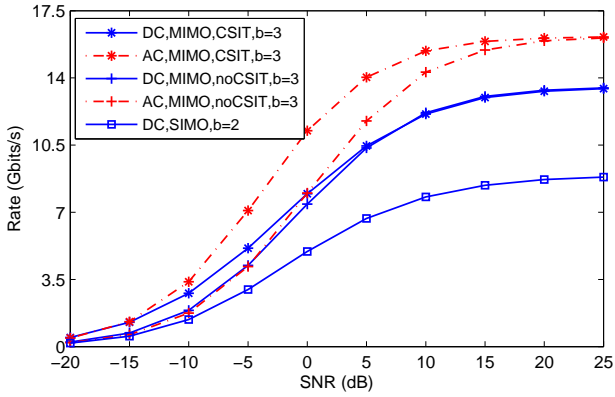}
\caption{Achievable rates for $1 \times 3$ SIMO and $3 \times 3$ MIMO systems  with analog (AC) and digital combining (DC) versus SNR, with $b$ referring to the optimal number of quantizer bins. Total ADC power consumption is fixed at $20$mW. $W_{tot}=7$ Ghz,  $c=494$ fJ.}
\end{figure}

\begin{table*}[t]\label{table}
\begin{tabular}{|l|l|l|l|l|l|l|l|l|l|l|l|l|l|l|l|l|}
\hline
                                                         & \multicolumn{4}{l|}{DC, MIMO, CSIT}                                                                                                                                       & \multicolumn{4}{l|}{AC, MIMO, CSIT}                                                                                                                                        & \multicolumn{4}{l|}{DC, MIMO, noCSIT}                                                                                                                                     & \multicolumn{4}{l|}{AC, MIMO,no CSIT}                                                                                                                                     \\ \hline
\begin{tabular}[c]{@{}l@{}}$P_{tot}$\\ (mW)\end{tabular} & \begin{tabular}[c]{@{}l@{}}Rate\\ (Gbits/s)\end{tabular} & $b$ & $N$ & \begin{tabular}[c]{@{}l@{}}W\\ (Ghz)\end{tabular} & \begin{tabular}[c]{@{}l@{}}Rate\\ (Gbits/s)\end{tabular} & $b$  & $N$ & \begin{tabular}[c]{@{}l@{}}W\\ (Ghz)\end{tabular} & \begin{tabular}[c]{@{}l@{}}Rate\\ (Gbits/s)\end{tabular} & $b$ & $N$ & \begin{tabular}[c]{@{}l@{}}W\\ (Ghz)\end{tabular} & \begin{tabular}[c]{@{}l@{}}Rate\\ (Gbits/s)\end{tabular} & $b$  & $N$ & \begin{tabular}[c]{@{}l@{}}W\\ (Ghz)\end{tabular} \\ \hline
100                                                      & 5.52                                                     & 5 & 1                                                      & 7                                                 & 2.16                                                     & 2  & 1                                                      & 1.73                                              & 5.40                                                     & 5 & 1                                                      & 7                                                 & 2.21                                                     & 2  & 1                                                     & 1.73                                              \\ \hline
150                                                      & 10.03                                                    & 2 & 2                                                      & 6.4                                               & 5.70                                                     & 6  & 1                                                      & 7                                                 & 7.20                                                     & 2 & 2                                                      & 6.4                                               & 5.64                                                     & 7  & 1                                                     & 6.9                                               \\ \hline
200                                                      & 13.09                                                    & 5 & 2                                                      & 7                                                 & 11.97                                                    & 6  & 2                                                      & 6.67                                              & 9.21                                                     & 5 & 2                                                      & 7                                                 & 8.40                                                     & 6  & 2                                                     & 6.67                                              \\ \hline
250                                                      & 17.85                                                    & 3 & 3                                                      & 6.88                                              & 14.91                                                    & 4  & 3                                                      & 7                                                 & 11.00                                                    & 3 & 3                                                      & 6.88                                              & 10.11                                                    & 4  & 3                                                     & 7                                                 \\ \hline
300                                                      & 20.51                                                    & 2 & 4                                                      & 6.4                                               & 17.68                                                    & 11 & 3                                                      & 7                                                 & 11.98                                                    & 5 & 3                                                      & 7                                                 & 11.58                                                    & 11 & 3                                                     & 7                                                 \\ \hline
350                                                      & 24.43                                                    & 4 & 4                                                      & 6.34                                              & 20.58                                                    & 10 & 4                                                      & 7                                                 & 13.23                                                    & 3 & 4                                                      & 7                                                 & 13.36                                                    & 10 & 4                                                     & 7                                                 \\ \hline
400                                                      & 28.07                                                    & 3 & 5                                                      & 6.05                                              & 22.47                                                    & 9  & 5                                                      & 7                                                 & 14.24                                                    & 5 & 4                                                      & 7                                                 & 14.85                                                    & 9  & 5                                                     & 7                                                 \\ \hline
450                                                      & 31.85                                                    & 4 & 5                                                      & 6.84                                              & 23.73                                                    & 16 & 5                                                      & 7                                                 & 15.40                                                    & 4 & 5                                                      & 6.84                                              & 15.89                                                    & 8  & 6                                                     & 6.96                                              \\ \hline
500                                                      & 36.40                                                    & 3 & 6                                                      & 6.88                                              & 25.62                                                    & 15 & 6                                                      & 7                                                 & 16.31                                                    & 3 & 6                                                      & 6.88                                              & 16.73                                                    & 15 & 7                                                     & 3.38                                              \\ \hline
\end{tabular}
\caption { {Performance comparison for  the $N \times N$ MIMO system with analog (AC) and digital (DC) combining, with and without CSIT. The table shows the achievable rate in Gbits/s,  the optimal resolution $b$, the optimal number of  antennas $N$, and the optimal bandwidth $W$ in Ghz under each scenario for a given total receiver power budget $P_{tot}$. $W_{tot}=7$ Ghz, $P_{LNA}=39$mW, $P_C=19.5$mW, $P_{PS}=19.5$mW, and $P_M=16.8$mW, $c=494$ fJ.}}
\end{table*}

Next,  we investigate the effect of the number of antennas on the achievable rates of the SIMO and the MIMO systems at 0dB SNR. The MIMO system has $N$ transmit and $N$ receive antennas, while for SIMO, $N_r=N$.  We assume total ADC power budget of $20$ mW as in Figure 4. Our results are shown in Figure 5, where we only include the best quantization resolution (3-bins for  most of the scenarios investigated, 2- and 3-bins for SIMO system with digital combining).  We observe that the rate of the MIMO system with analog and digital combining increases with the number of antennas. In addition, although analog combining has higher capacity for all the cases, the optimal bandwidth of  digital combining scales as  ${1}/{N}$ as opposed to the constant bandwidth usage of  analog combining.

\begin{figure}[ht]\label{sim6}
\centering
\includegraphics[scale=0.65,trim= 0 0 0 0]{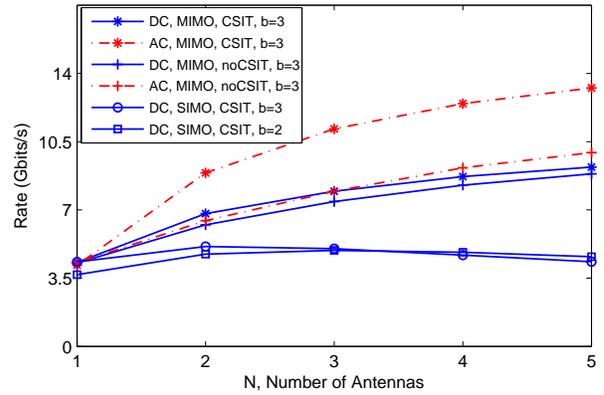}
\caption{Achievable rates for $1\times N$ SIMO, and $N \times N$ MIMO systems  with analog (AC) and digital combining (DC) versus the number of antennas $N$ when SNR is 0dB.  Here $b$ refers to the optimal number of quantizer bins. Total ADC power consumption is fixed at $20$mW. $W_{tot}=7$ Ghz,  $c=494$ fJ.}
\end{figure}

\begin{figure}[ht]\label{sim6}
\centering
\includegraphics[scale=0.65,trim= 0 0 0 0]{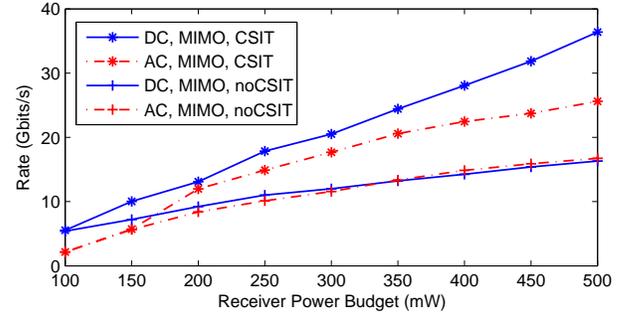}
\caption{Achievable rates  with analog (AC) and digital combining (DC)  versus total receiver power budget $P_{tot}$ when SNR is 0dB. Rates are computed under optimal number of antennas $N=N_t=N_r$, bandwidth and ADC resolution. $W_{tot}=7$ Ghz, $P_{LNA}=39$mW, $P_C=19.5$mW, $P_{PS}=19.5$mW, and $P_M=16.8$mW, $c=494$ fJ.}
\end{figure}

Finally, we consider all components of the front-end receiver, namely  the LNA, combiner, phase shifter, mixer and ADC, in computing the total receiver power consumption. We set $P_{LNA}=39$mW, $P_C=19.5$mW, $P_{PS}=19.5$mW \cite{Kramer}, and $P_M=16.8$mW \cite{yu}.  We optimize the rate of each MIMO architecture with analog and digital combining, with and without CSIT,  over the number of antennas $N=N_t=N_r$, bandwidth and ADC resolution for a given total receiver power budget $P_{tot}$.  We fix the SNR to be 0 dB. In Figure 6, we provide the achievable rates versus the total receiver power budget $P_{tot}$. In addition, Table I provides the maximum achievable rate in Gbits/s, the optimal resolution $b$, the optimal number of  antennas $N=N_t=N_r$, and bandwidth $W$ in Ghz for different values of the  total receiver power budget $P_{tot}$ under different combining and CSIT scenarios.

From Figure 6  we observe that digital combining with CSIT has higher rate than analog combining with CSIT, and the gap in between increases as the power budget increases.  When there is no CSIT, analog combining performs  better than the digital one only when the total power budget is greater than 350mW, and even in that case the performance improvement is very small. These observations are in contrast with Figure 4, where only the ADC power budget was considered. Comparing equations (\ref{pow1}) and (\ref{pow2}), we see that under a total receiver budget constraint, the power consumption of analog combining also starts increasing with the number of antennas, which helps emphasize the spatial multiplexing advantage of digital architectures. Table I  shows that  optimal resolution level of digital combining is always less than that of the analog one except for $P_{tot}=100$ mW. As expected, the optimal number of antennas increases with $P_{tot}$. However, we observe that the optimal bandwidth utilization may fluctuate as $P_{tot}$ increases. This is due to fact that in some cases it is better to spend the additional power to increase the number of antennas or the increase the resolution while reducing the bandwidth.

\section{Conclusions} \label{s:conc}

In this paper, we have studied a point-to-point MIMO mmWave communication system under two receiver architectures: Digital combining where each antenna element has a separate pair of ADCs, or analog combining where the signals are combined in analog domain so that only one ADC pair is required. For these receiver architectures we have investigated the effects of the bandwidth, resolution of ADCs and the number of antennas at the receiver on the achievable rate under the constraint of the total front-end receiver power or only the ADC power. Using AQNM, first, we have shown that there is an optimal bandwidth and resolution of ADC for the SISO channel. Then, we have derived achievable rates of the MIMO system with both analog combining and digital combining, with and without CSIT.  When  only ADC power budget is constrained, we have illustrated that depending on the operating regime analog combining may have higher rate but digital combining utilizes less bandwidth. However, when power consumption of all the receiver components are taken into account, we have shown that digital combining may have higher rate, and in some cases it can be optimal to increase the number of antennas or the increase the resolution while reducing the bandwidth.


\begin{thebibliography}{1}
\bibitem{KhanPi:11-CommMag}
F. Khan and Z. Pi; , ``An introduction to millimeter-wave mobile broadband systems," \emph{IEEE Commun. Magazine,} vol. 49, no. 6, pp. 101-107, Jun. 2011.
\bibitem{PietBRPC:12}
P. Pietraski, D. Britz, A. Roy, R Pragada, and G. Charlton, ``Millimeter wave and terahertz communications: Feasibility and challenges," \emph{ZTE Commun.}, vol. 10, no. 4, pp. 3-12, Dec. 2012.
\bibitem{rappaportmillimeter}
 T. S. Rappaport, S. Sun, R. Mayzus, Z. Hang, Y.  Azar, K. Wang, G. N.  Wong, J. K.  Schulz, M. Samimi, and F. Gutierrez, ``Millimeter wave mobile communications for 5G cellular: It will work!," \emph{IEEE Access,} vol. 1, pp. 335-349, May. 2013.
\bibitem{RanRapE:14}
S. Rangan, T. S. Rappaport and E. Erkip, ``Millimeter-wave cellular wireless networks: Potentials and challenges," \emph{ Proc.\ of the IEEE}, vol. 102, no. 3, pp. 366-385, Mar. 2014.
\bibitem{BocHLMP:14}
F. Boccardi, R. W. Heath, A. Lozano, T. L.  Marzetta, and P. Popovski, ``Five disruptive technology directions for 5G," \emph{IEEE Commun. Magazine,} vol. 52, no. 2, pp. 74-80, Feb. 2014.
\bibitem{Rappaport2014-mmwbook}
T. S. Rappaport, R. W. Heath Jr., R. C. Daniels, and J. N. Murdock, ``Millimeter wave wireless Communications," Pearson Education, 2014.
\bibitem{cover}
 S. N. Diggavi and T. M. Cover, ``The worst additive noise under a covariance constraint," \emph{IEEE Trans. on Information Theory,} vol. 47, no. 7, pp. 3072-3081, Nov. 2001.
\bibitem{madhow}
J. Singh, O. Dabeer, and U. Madhow, ``On the limits of communication with low-precision analog-to-digital conversion at the receiver," \emph{IEEE Trans. Wireless Commun.,} vol. 57, no. 12, pp. 3629-3639, Dec. 2009.
\bibitem{gray}
A. Gersho and R. M. Gray, \emph{Vector Quantization and Signal Compression.} Boston, MA, Kluver 1992.
\bibitem{sparse}
O. E. Ayach, S. Rajagopal, S. Abu-Surra, Z. Pi, and R. W. Heath, Jr., ``Spatially sparse precoding in millimeter wave MIMO systems," \emph{IEEE Trans. Wireless Commun.,} vol. 13, no. 3, pp. 1499�1513, Mar. 2014.
\bibitem{robert}
A. Alkhateeb, J. Mo, N. G. Prelcic, and R. W. Heath Jr, ``MIMO precoding and combining solutions for millimeter-wave systems," \emph{IEEE Commun. Magazine,} vol. 52, no. 12, pp. 122-131, Dec. 2014.
\bibitem{lee}
H. S. Lee and C. G. Sodini, ``Analog-to-digital converters: Digitizing the analog world," \emph{Proc.\ of the IEEE}, vol. 96, no. 2, pp. 323-334, Feb. 2008.
\bibitem{TseV:07}
D. Tse and P. Viswanath, ``Fundamentals of Wireless Communication," Cambridge University Press, 2007.
\bibitem{sun}
S. Sun, T. S. Rappaport, R. W. Heath, A. Nix, and S. Rangan, ``Mimo for millimeter-wave wireless communications: Beamforming, spatial multiplexing, or both?," \emph{IEEE Commun. Magazine}, vol. 52, no. 12, pp. 110-121, Dec. 2014.
\bibitem{KohReb:07}
K. J. Koh and  G.M. Rebeiz, ``0.13- m CMOS phase shifters for X-, Ku- and K-band phased arrays," \emph{ IEEE Journal of Solid-State Circuits,} vol. 42, no. 11, pp. 2535-2546, Nov. 2007.
\bibitem{KohReb:09}
K. J. Koh and  G.M. Rebeiz, ``A millimeter-wave (40�45 GHz) 16-element phased-array transmitter in 0.18-m SiGe BiCMOS technology," \emph{ IEEE Journal of Solid-State Circuits,} vol. 44, no. 5, pp. 1498-1509, May 2009.
\bibitem{GuanHaHa:04}
X. Guan, H. Hashemi and A. Hajimir, ``A fully integrated 24-GHz eight-element phased-array receiver in silicon," \emph{ IEEE Journal of Solid-State Circuits,} vol. 39, no. 12, pp. 2311-2320, Dec. 2004.
\bibitem{Heath:partialBF}
A. Alkhateeb, O. E. Ayach, G. Leus, and R. W. Heath Jr., ``Hybrid analog-digital beamforming design for millimeter wave cellular systems with partial channel knowledge," \emph{ Information Theory and Applications Workshop (ITA),} Feb. 2013.
\bibitem{fettweis}
S. Krone and G. Fettweis, ``Capacity of communications channels with 1-bit quantization and oversampling at the receiver," \emph{in Proc.\ of IEEE Sarnoff Symposium,} May 2012.
\bibitem{heath}
J. Mo, and R. W. Heath, Jr., ``High SNR capacity of millimeter wave MIMO systems with one-bit quantization," \emph{ Information Theory and Applications Workshop (ITA),} Feb. 2014.
\bibitem{mezgani}
A. Mezghani, and J. A. Nossek, ``On ultra-wideband MIMO systems with 1-bit quantized outputs: Performance analysis and input optimization," \emph{IEEE International Symposium on Information Theory,} 2007.
\bibitem{nossek}
Q. Bai, A. Mezghani, and J. A. Nossek, ``On the optimization of ADC resolution in multi-antenna systems," \emph{Proc.\ of the Tenth Int. Symposium on Wireless Commun. Systems} (ISWCS 2013), pp. 1-5, Aug. 2013.
\bibitem{BarHosCellSearch:14-spawc}
C. N. Barati, S. A. Hosseini, S. Rangan, P. Liu, T. Korakis. and S. S. Panwar, ``Directional Cell Search for Millimeter Wave Cellular Systems," \emph{Proc.\ IEEE SPAWC}, Jun. 2014.
\bibitem{Taghivand}
A. S. Y. Poon and M. Taghivand, ``Supporting and enabling circuits for antenna arrays in wireless communications," \emph{Proc.\ of the IEEE}, vol. 100, no. 7, pp. 2207-2218, Jul. 2012.
\bibitem{rangan}
A. K. Fletcher, S. Rangan, V. K. Goyal, and K. Ramchandran, ``Robust predictive quantization: Analysis and design via convex optimization," \emph{IEEE Journal of Selected Topics in Signal Processing,} vol. 1, no. 4, pp. 618-632, Dec. 2007.
\bibitem{Murmann}
C. Hayun, A. Rylyakov, Z. T. Deniz, J. Bulzacchelli, W. G. Yeon, and D. Friedman, ``A 7.5-GS/s 3.8-ENOB 52-mW flash ADC with clock duty cycle control in 65nm CMOS," \emph{2009 Symposium on VLSI Circuits,} pp. 268-269, 16-18, Jun. 2009.
\bibitem{daniel}
R. C. Daniels and R. W. Heath, ``60 GHz wireless communications: emerging requirements and design recommendations," \emph{IEEE Vehicular Technology Magazine,} vol. 2, no. 3, pp. 41-50, Sep. 2007.
\bibitem{max}
J. Max,  ``Quantizing for minimum distortion", \emph{IRE Trans. Inform. Theory,}  vol. IT-6,  pp. 7-12, 1960.
\bibitem{telatar}
E. Telatar,  ``Capacity of multi-antenna Gaussian channels",  \emph{Europ. Trans. Telecommun.,} vol. 10,  pp. 585-596, 1999.
\bibitem{love}
D. J. Love and R. W. Heath, ``Equal gain transmission in multiple-input multiple-output wireless systems," \emph{IEEE Trans. on Commun.,} vol. 51, no. 7, pp. 1102-1110, Jul. 2003.
\bibitem{Kramer}
M. Kramer, D. Dragomirescu, and R. Plana, ``Design of a very low-power, low-cost 60 GHz receiver front-end implemented in 65 nm CMOS technology", \emph{International Journal of Microwave and Wireless Technologies 3,} Special Issue 2, pp. 131-138, 2011.
\bibitem{yu}
Y. Yikun, P. G. M. Baltus, A. Graauw, E. V. D. Heijden, C. S. Vaucher, and P. G. M. V. Roermund, ``A 60 GHz phase shifter integrated with LNA and PA in 65 nm CMOS for phased array systems," \emph{IEEE Journal of Solid-State Circuits,} vol. 45, no. 9, pp. 1697-1709, Sep. 2010.
\end{thebibliography}
\end{document}